\newtheorem{theorem}{Theorem}
\newtheorem{definition}{Definition}
\begin{document}
\title{Construction of High-Rate Regular Quasi-Cyclic LDPC Codes Based on Cyclic Difference Families}
\author{Hosung~Park, Seokbeom~Hong, Jong-Seon~No,~\IEEEmembership{Fellow,~IEEE,} and~Dong-Joon~Shin,~\IEEEmembership{Senior~Member,~IEEE}%
\thanks{H.~Park, S.~Hong, and J.-S.~No are with the Department of Electrical Engineering and Computer Science, INMC, Seoul National University, Seoul 151-744, Korea (e-mail: hpark1@snu.ac.kr, fousbyus@ccl.snu.ac.kr, jsno@snu.ac.kr).}%
\thanks{D.-J. Shin is with the Department of Electronic Engineering, Hanyang University, Seoul 133-791, Korea (e-mail: djshin@hanyang.ac.kr).}%
}%


\maketitle

\begin{abstract}
For a high-rate case, it is difficult to randomly construct good low-density parity-check (LDPC) codes of short and moderate lengths because their Tanner graphs are prone to making short cycles. Also, the existing high-rate quasi-cyclic (QC) LDPC codes can be constructed only for very restricted code parameters. In this paper, a new construction method of high-rate regular QC LDPC codes with parity-check matrices consisting of a single row of circulants with the column-weight 3 or 4 is proposed based on special classes of cyclic difference families. The proposed QC LDPC codes can be constructed for various code rates and lengths including the minimum achievable length for a given design rate, which cannot be achieved by the existing high-rate QC LDPC codes. It is observed that the parity-check matrices of the proposed QC LDPC codes have full rank.
It is shown that the error correcting performance of the proposed QC LDPC codes of short and moderate lengths is almost the same as that of the existing ones through numerical analysis.
\end{abstract}

\begin{IEEEkeywords}
Code length, code rate, cyclic difference families (CDFs), girth, quasi-cyclic (QC) low-density parity-check (LDPC) codes.
\end{IEEEkeywords}

\vspace{2mm}
\section{Introduction}\label{sec:introduction}
\IEEEPARstart{L}{ow}-density parity-check (LDPC) codes \cite{Gallager} have been one of dominant error-correcting codes for high-speed communication systems or data storage systems because they asymptotically have capacity-approaching performance under iterative decoding with moderate complexity. Among them, quasi-cyclic (QC) LDPC codes are well suited for hardware implementation using simple shift registers due to the regularity in their parity-check matrices so that they have been adopted in many practical applications.

For a high-rate case, it is difficult to randomly construct good LDPC codes of short and moderate lengths because their parity-check matrices are so dense compared to a low-rate case for the given code length and degree distribution that they are prone to making short cycles.
Among well-known structured LDPC codes, finite geometry LDPC codes \cite{Kou}-\cite{Kamiya} and LDPC codes constructed from combinatorial designs \cite{Vasic}-\cite{Fujisawa} are adequate for high-rate LDPC codes. The error correcting performance of these LDPC codes is verified under proper decoding algorithms but they have severe restrictions on flexibly choosing the code rate and length. Also, since finite geometry LDPC codes usually have much redundancy and large weights in their parity-check matrices, they are not suitable for a strictly power-constrained system with iterative message-passing decoding.

It is known that the parity-check matrix structure consisting of a single row of circulants \cite{Tang}-\cite{Johnson2}, \cite{Xia}, \cite{Baldi1} is adequate for generating high-rate QC LDPC codes of short and moderate lengths. The class-I circulant EG-LDPC codes in \cite{Tang} and the duals of one-generator QC codes in \cite{Kamiya} are constructed from the Euclidean geometry and the affine geometry, respectively, and they have very restricted rates and lengths and much  redundancy. In \cite{Vasic}-\cite{Johnson2}, QC LDPC codes constructed from cyclic difference families (CDFs) \cite{Colbourn1} are proposed, which also have restricted lengths. Computer search algorithms are proposed in \cite{Xia} and \cite{Baldi1} for various-length QC LDPC codes of this parity-check matrix structure, but they cannot generate QC LDPC codes as short as the QC LDPC codes constructed from CDFs for most of code rates.

In this paper, new high-rate regular QC LDPC codes with parity-check matrices consisting of a single row of circulants with the column-weight 3 or 4 are proposed based on special classes of CDFs. In designing the proposed QC LDPC codes, we can flexibly choose the code rate and length including the minimum achievable code length for a given design rate. The parity-check matrices of the proposed QC LDPC codes have full rank when the column-weight is 3 and have almost full rank when the column-weight is 4 because there is just one redundant row. Numerical analysis shows that the error correcting performance of the proposed QC LDPC codes of short and moderate lengths is almost the same as that of the existing high-rate QC LDPC codes.

The remainder of the paper is organized as follows.
Section \ref{sec:CDF} introduces the definition and the existence theorems of CDFs, and provides a construction method of a class of CDFs. In Section \ref{sec:proposed}, high-rate regular QC LDPC codes are proposed and analyzed. In Section \ref{sec:simulation}, the error correcting performance of the proposed QC LDPC codes is compared to that of the existing high-rate QC LDPC codes via numerical analysis. Finally, the conclusions are provided in Section \ref{sec:conclusions}.

\vspace{2mm}
\section{Cyclic Difference Families} \label{sec:CDF}

\subsection{Definition and Existence}

A cyclic difference family is defined as follows.

\vspace{2mm}
\begin{definition}[\cite{Colbourn1}]
Consider the additive group $\mathbb{Z}_v = \{ 0,1,\ldots,v-1 \}$.
Then $t$ $k$-element subsets of $\mathbb{Z}_v$, $B_i = \{ b_{i1}, b_{i2},\ldots,b_{ik} \}$, $i=1,2,\ldots,t$, $b_{i1} < b_{i2} < \cdots < b_{ik}$, form a $(v,k,\lambda)$ \textit{cyclic difference family} (CDF) if every nonzero element of $\mathbb{Z}_v$ occurs $\lambda$ times among the differences $b_{im}-b_{in}$, $i=1,2,\ldots,t$, $m \neq n$, $m,n=1,2,\ldots,k$.
\end{definition}

\vspace{2mm}
According to \cite{Vasic}-\cite{Fujisawa}, $(k(k-1)t+1,k,1)$ CDFs are adequate for constructing parity-check matrices of QC LDPC codes with girth at least 6. Theorem \ref{thm:CDF_existence} shows the existence of such CDFs.

\vspace{2mm}
\begin{theorem}
The existence of $(k(k-1)t+1,k,1)$ CDFs is given as:
\begin{itemize}
	\item[1)] There exists a $(6t+1,3,1)$ CDF for all $t \geq 1$ \cite{Peltesohn}.
	\item[2)] A $(12t+1,4,1)$ CDF exists for all $1 \leq t \leq 1000$ \cite{Ge}. 
	\item[3)] A $(20t+1,5,1)$ CDF exists for $1  \leq t \leq 50$ and $t \neq 16,25,31,34,40,45$ \cite{Julian}.
\end{itemize}
\label{thm:CDF_existence}
\end{theorem}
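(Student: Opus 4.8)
The plan is to treat the three parts separately, since each corresponds to a different block size $k \in \{3,4,5\}$ and rests on a distinct existence result for cyclic difference families with index $\lambda = 1$. Before invoking these, I would first record the counting identity that pins down the parameter $v = k(k-1)t+1$. Each of the $t$ base blocks $B_i$ of size $k$ contributes exactly $k(k-1)$ ordered differences $b_{im} - b_{in}$ with $m \neq n$, so the total number of differences arising from the family is $t\,k(k-1)$. For a $(v,k,1)$ CDF these must cover every nonzero element of $\mathbb{Z}_v$ exactly once, which forces $v - 1 = t\,k(k-1)$, i.e.\ $v = k(k-1)t + 1$. Thus the stated orders are precisely those for which the differences can partition $\mathbb{Z}_v \setminus \{0\}$, the necessary arithmetic condition is automatically met, and each part reduces to a pure existence question over the relevant range of $t$.

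For part~1) the order is $v = 6t+1 \equiv 1 \pmod 6$, and a $(6t+1,3,1)$ CDF is exactly a cyclic Steiner triple system of order $6t+1$. I would invoke Peltesohn's theorem \cite{Peltesohn}, which provides a cyclic Steiner triple system for every admissible order $v \equiv 1,3 \pmod 6$ with the single exception $v = 9$. Since $6t+1 \equiv 1 \pmod 6$ and $6t+1 \neq 9$ for every integer $t \geq 1$, the excepted order never occurs, so existence holds for all $t \geq 1$. One can alternatively write down the base blocks in closed form using Skolem- and Rosa-type sequences, giving a uniform construction in this case.

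Parts~2) and~3) are the substantive ones and form the main obstacle, because no single closed-form construction covers all admissible $t$. Here the orders satisfy $v = 12t+1 \equiv 1 \pmod{12}$ and $v = 20t+1 \equiv 1 \pmod{20}$, which is again the necessary condition from the counting identity. For these I would cite the existence results of \cite{Ge} and \cite{Julian}: when $v$ is prime the base blocks come from cyclotomic classes and primitive-root constructions; the composite orders are obtained by recursive product-type constructions combined with the prime cases; and the finitely many small orders not reached by a general construction are settled by exhaustive computer search. This mixed case analysis is exactly what confines part~2) to $1 \leq t \leq 1000$ and produces the excluded values $t \in \{16,25,31,34,40,45\}$ of part~3), which correspond to orders where either no such CDF exists or existence is unresolved. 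Assembling the prime, recursive, and search-based cases over the stated ranges completes each part.
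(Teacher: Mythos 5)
Your proposal is correct and takes essentially the same approach as the paper: the paper offers no proof of this theorem at all, treating it as a compilation of known existence results whose justification is delegated entirely to the cited references \cite{Peltesohn}, \cite{Ge}, \cite{Julian}, which is exactly what your argument reduces to. Your supplementary details---the counting identity forcing $v = k(k-1)t+1$, and the observation that Peltesohn's exceptional order $v=9$ never arises since $6t+1 \equiv 1 \pmod 6$---are correct but go beyond anything the paper itself writes down.
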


\vspace{2mm}
In this paper, a special class of CDFs, called perfect difference family (PDF), will be used to construct high-rate regular QC LDPC codes with parity-check matrices consisting of a single row of circulants so that more various code parameters can be achieved. Before introducing its definition, we will define two terms as follows.
Consider $t$ $k$-element subsets of $\mathbb{Z}_v$, $B_i = \{ b_{i1}, b_{i2},\ldots,b_{ik} \}$, $i=1,2,\ldots,t$, $b_{i1} < b_{i2} < \cdots < b_{ik}$. Among the differences $b_{im}-b_{in}$, $i=1,2,\ldots,t$, $m \neq n$, $m,n=1,2,\ldots,k$, we will call $tk(k-1)/2$ differences $b_{im} - b_{in}$, $i=1,2,\ldots,t$, $1 \leq m < n \leq k$, as the \textit{forward differences over $\mathbb{Z}_v$} of the subsets and the remaining $tk(k-1)/2$ differences as the \textit{backward differences over $\mathbb{Z}_v$} of the subsets.

\vspace{2mm}
\begin{definition}[\cite{Colbourn1}]
Consider a $(v,k,1)$ CDF, $B_i = \{ b_{i1}, b_{i2},\ldots,b_{ik} \}$, $i=1,2,\ldots,t$, $b_{i1} < b_{i2} < \cdots < b_{ik}$, with $v=k(k-1)t+1$. Then the CDF is called a $(v,k,1)$ \textit{perfect difference family} (PDF) if the $tk(k-1)/2$ backward differences cover the set $\{ 1,2,\ldots,(v-1)/2 \}$.
\end{definition}

\vspace{2mm}
The condition on the existence of PDFs is stricter than that of CDFs. Some recent results on the existence of PDFs are summarized in the following theorem.

\vspace{2mm}
\begin{theorem}
The existence of $(k(k-1)t+1,k,1)$ PDFs is given as:
\begin{itemize}
	\item[1)] A $(6t+1,3,1)$ PDF exists if and only if $t \equiv 0 ~\mathrm{or}~ 1 \mod 4$ \cite{Beth}.
	\item[2)] A $(12t+1,4,1)$ PDF exists for $t=1$, $4 \leq t \leq 1000$ \cite{Ge}.
	\item[3)] $(20t+1,5,1)$ PDFs are known for $t=6,8,10$ but for no other small value of $t$ \cite{Mathon}.
	\item[4)] There is no $(k(k-1)t+1,k,1)$ PDF for $k \geq 6$ \cite{Mathon}.
\end{itemize}
\label{thm:PDF_existence}
\end{theorem}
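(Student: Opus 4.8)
The plan is to treat the four parts by two different methods, since parts~1)--3) are existence statements (with a necessity half in part~1)) that call for explicit constructions, whereas part~4) is a universal nonexistence statement that needs a counting argument. Throughout I would use the reformulation that a $(k(k-1)t+1,k,1)$ PDF is exactly a partition of the integer interval $\{1,2,\ldots,N\}$, with $N=(v-1)/2=k(k-1)t/2$, into $t$ blocks, where each block $B_i=\{b_{i1}<\cdots<b_{ik}\}$ contributes its $\binom{k}{2}$ positive differences $b_{in}-b_{im}$ ($m<n$), and every value in $\{1,\ldots,N\}$ is hit exactly once. This is the perfect-system-of-difference-sets viewpoint, and it makes the counting transparent.

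For the $k=3$ case (part~1)) I would first reduce to combinatorics on $\{1,\ldots,3t\}$: each block $\{0,x,x+y\}$ contributes the triple $\{x,y,x+y\}$ with $x+y$ its span, so a PDF is a partition of $\{1,\ldots,3t\}$ into $t$ triples of the form $\{x,y,x+y\}$. The necessary condition $t\equiv 0,1 \pmod 4$ then drops out of a parity count: each such triple sums to the even number $2(x+y)$, so $\sum_{1}^{3t} j = 3t(3t+1)/2$ must be even, and checking $t \bmod 4$ leaves exactly the two admissible residues. Sufficiency I would obtain from the classical Skolem/Langford-type constructions, leaning on \cite{Beth}. For parts~2) and~3) the reduction is the same but the triples become $k$-blocks; here I would exhibit explicit base-block families and, over the finite ranges stated, verify the perfect-cover property directly, as tabulated in \cite{Ge} and \cite{Mathon}.

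The heart of the theorem, and the step I expect to be the main obstacle, is the nonexistence in part~4) for $k\geq 6$. My approach would be a top-down forcing argument on the largest differences. The maximum value $N$ can only be produced as the full span $b_{ik}-b_{i1}$ of some block, so exactly one block attains span $N$; writing a block's positive-difference sum in terms of its gaps $g_1,\ldots,g_{k-1}$ as $\sum_{l=1}^{k-1} l(k-l)\,g_l$ and summing over all blocks to equal $N(N+1)/2$ yields a rigid global constraint that pins down the extremal gaps. I would then argue that the next-largest values $N-1,N-2,\ldots$ must also arise from near-extremal spans, progressively forcing the shape of the top blocks until, for $k\geq 6$, the number of large differences that have to be realized exceeds what the admissible block shapes can supply. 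Making this forcing fully rigorous -- controlling exactly which blocks can carry each of the top differences and closing off every residual case -- is the delicate part, and I would expect to invoke the detailed case analysis of \cite{Mathon} rather than reproduce all of it here.
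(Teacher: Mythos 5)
You should know at the outset that the paper offers no proof of this statement: Theorem \ref{thm:PDF_existence} is a compilation of known results, and its entire justification consists of the citations \cite{Beth}, \cite{Ge}, and \cite{Mathon}. Against that baseline your proposal is sound, and it even adds something the paper does not contain: your parity argument for the necessity half of part 1) is correct and self-contained. Each block of a $(6t+1,3,1)$ PDF contributes positive differences $x$, $y$, $x+y$ summing to the even number $2(x+y)$, so perfection forces $1+2+\cdots+3t = 3t(3t+1)/2$ to be even, which holds precisely when $t \equiv 0$ or $1 \pmod 4$; this is, incidentally, the same parity trick the paper itself deploys later in Theorem \ref{thm:main3} to rule out $z=6L+2$. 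For everything else you defer to the cited literature, which is exactly what the paper does, and realistically is all one can do: sufficiency in 1) requires the Skolem-type constructions of \cite{Beth}; part 2) rests on the recursive constructions and computer searches of \cite{Ge}, so no short verification of a statement covering all $4 \le t \le 1000$ is possible; and parts 3)--4) are substantive theorems recorded in \cite{Mathon}. The one caution is that your ``top-down forcing'' sketch for part 4) should not be mistaken for an outline of the actual proof: the known nonexistence argument for block size $k \geq 6$ (from the theory of perfect systems of difference sets) is a global counting bound, comparing the maximum number of differences a block of size $k$ can contribute above fixed fractions of $N$ with the exact number forced by perfection, not an induction that determines the top blocks one at a time; your gap identity $\sum_{l} l(k-l) g_l$ summed over blocks to $N(N+1)/2$ is true but does not by itself ``pin down the extremal gaps,'' and there is no reason to believe the case analysis you gesture at would close. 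Since you explicitly fall back on \cite{Mathon} for that part, your proposal stands or falls exactly where the paper's citation does, with the parity argument as your only, but correct, self-contained contribution beyond it.
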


\vspace{2mm}
Since there are no PDFs for $k \geq 6$ and no sufficiently many PDFs for $k = 5$ from Theorem \ref{thm:PDF_existence}, we focus on the case of $k=3,4$. The construction of PDFs for $k=3$ and $4$ is provided in \cite{Colbourn1} and \cite{Ge}, respectively.

\subsection{Construction of $(6t+1,3,1)$ CDFs for $t \equiv 2~\mathrm{or}~3 \mod 4$}
\label{subsec:Skolem}

Although $(6t+1,3,1)$ PDFs  do not exist for $t \equiv 2~\mathrm{or}~3 \mod 4$, a class of $(6t+1,3,1)$ CDFs constructed from hooked Skolem sequences for $t \equiv 2~\mathrm{or}~3 \mod 4$ can be used to construct parity-check matrices of QC LDPC codes with various code parameters, which consist of a single row of circulants.

\vspace{2mm}
\begin{definition}[\cite{Colbourn1}]
A \textit{Skolem sequence} of order $t$ is a sequence $S = ( a_1,a_2,\ldots,a_{2t} )$ of integers satisfying the following two conditions:
\begin{itemize}
	\item[i)] For every $k \in \{ 1,2,\ldots,t \}$, there exist exactly two elements $a_i$ and $a_j$ in $S$ such that $a_i = a_j = k$.
	\item[ii)] If $a_i = a_j = k$ with $i < j$, then $j-i = k$.
\end{itemize}
Skolem sequences are also written as collections of ordered pairs $\{ (u_i,v_i): 1 \leq i \leq t,v_i - u_i = i \}$ with $\bigcup_{i=1}^{t} \{ u_i,v_i \} = \{ 1,2,\ldots,2t \}$. A \textit{hooked Skolem sequence} of order $t$ is a sequence $S = ( a_1,a_2,\ldots,a_{2t-1},a_{2t}=0,a_{2t+1} )$ satisfying the conditions i) and ii).
\label{def:Skolem}
\end{definition}

\vspace{2mm}
A hooked Skolem sequence of order $t$ exists if and only if $t \equiv 2~ \mathrm{or} ~3 \mod 4$, and it can be constructed by the method in \cite{Colbourn1} as follows. Note that the ordered pairs are used to represent a hooked Skolem sequence as in Definition \ref{def:Skolem}.

\vspace{2mm}
\begin{itemize}
	\item $t=2$; $(1,2),(3,5)$ 
	\item $t=3$; $(1,4),(2,3),(5,7)$ 
	\item $t=4s+2$, $s \geq 1$;
	\item[] $
		\begin{cases}
			(r,4s-r+2),&r=1,\ldots,2s\\
			(4s+r+3,8s-r+4),&r=1,\ldots,s-1\\
			(5s+r+2,7s-r+3),&r=1,\ldots,s-1\\
			(2s+1,6s+2),(4s+2,6s+3),\\
			(4s+3,8s+5),(7s+3,7s+4)
		\end{cases}
		$
	\item $t=4s-1$, $s \geq 2$;
	\item[] $
		\begin{cases}
			(4s+r,8s-r-2),&r=1,\ldots,2s-2\\
			(r,4s-r-1),&r=1,\ldots,s-2\\
			(s+r+1,3s-r),&r=1,\ldots,s-2\\
			(s-1,3s),(s,s+1),(2s,4s-1),\\
			(2s+1,6s-1),(4s,8s-1)
		\end{cases}
		$
\end{itemize}

\vspace{2mm}
From hooked Skolem sequences, $(6t+1,3,1)$ CDFs can be constructed for $t \equiv 2~ \mathrm{or} ~3 \mod 4$. After constructing a hooked Skolem sequence $(u_i,v_i)$ of order $t$, $1 \leq i \leq t$, a $(6t+1,3,1)$ CDF is obtained by letting $B_i = \{ 0, i, v_i +t\}$. We can easily check that all differences in $B_i$'s cover the set $\{ 1,2,\ldots,6t \}$.

\vspace{2mm}
\section{High-Rate QC LDPC Codes Constructed From PDFs and CDFs} \label{sec:proposed}

\subsection{Proposed QC LDPC Codes}

Consider a binary regular LDPC code whose parity-check matrix $H$ is a $1 \times L$ array of $z \times z$ circulants given as
\begin{equation}
	H = \begin{bmatrix} H_1 & H_2 & \cdots & H_{L} \end{bmatrix}
\label{eq:H}
\end{equation}
where a \textit{circulant} $H_{i}$ is defined as a matrix whose each row is a cyclic shift of the row above it. This LDPC code is quasi-cyclic because applying circular shifts within each length-$z$ subblock of a codeword gives another codeword.
A circulant is entirely described by the positions of nonzero elements in the first column. Let $i$, $0 \leq i \leq z-1$, be the index of the $(i+1)$-st element in the first column. Then, the \textit{shift value}(s) of a circulant is (are) defined as the index (indices) of the nonzero element(s) in the first column. Note that a shift value takes the value from $0$ to $z-1$. Let $d_v$ ($d_c$) denote the column-weight (row-weight) of $H$. Then we have $d_c = d_v L$ and the design rate of this LDPC code is $R=(L-1)/L$. Let $s_{ij}$, $i=1,\ldots,L$ and $j=1,\ldots,d_v$, denote the $j$-th smallest shift value of $H_i$, that is, $s_{i1} < s_{i2} < \cdots < s_{id_v}$, which correspond to the indices of 1's in the first column of $H_i$.

We propose a new class of high-rate QC LDPC codes which have the parity-check matrix form in (\ref{eq:H}) constructed from PDFs or CDFs given in Subsection \ref{subsec:Skolem}. Under some proper constraints, a parity-check matrix of QC LDPC code with the column-weight 3 or 4 can be obtained by taking shift values of $H_i$ from $B_i$ in the CDF or the PDF. More concretely, QC LDPC codes can be constructed by using $B_i$, $i=1,\ldots,L$, of CDF or PDF as follows:

\vspace{2mm}
\begin{itemize}
	\item[1)] Choose the code parameters $d_v=3~\mathrm{or}~4$, $L$, and $z$ such that
	\begin{itemize}
		\item[i)] $L \geq 2$ for $d_v =3$ and $4 \leq L \leq 1000$ for $d_v=4$
		\item[ii)] $z \geq d_v(d_v-1)L+1$, where $z \neq 6L+2$ for $d_v=3$ and $L \equiv 2~ \mathrm{or} ~3 \mod 4$.
	\end{itemize}
	\item[2)] If $d_v=3$ and $L \equiv 2~ \mathrm{or} ~3 \mod 4$, construct a $(6L+1,3,1)$ CDF $B_i$, $i=1,\ldots,L$, as in Subsection \ref{subsec:Skolem}. Otherwise, construct a $(d_v (d_v-1) L+1,d_v,1)$ PDF $B_i$, $i=1,\ldots,L$, as in \cite{Colbourn1} and \cite{Ge}.
	\item[3)] Let $s_{ij} = b_{ij}$, $i=1,2,\ldots,L$ and $j=1,2,\ldots,d_v$.

\end{itemize}

\vspace{2mm}
For $d_v=5$, QC LDPC codes with parity-check matrices consisting of a single row of circulants can also be constructed by the proposed procedure, but for $L$ other than 6, 8, and 10, $(20L+1,5,1)$ PDFs are still unknown as in Theorem \ref{thm:PDF_existence}.

\subsection{Girth of the Proposed QC LDPC Codes}

It is well known that parity-check matrices including a circulant of column-weight larger than or equal to 3 have a girth at most 6 \cite{Tanner}. In this subsection, we will show that the proposed QC LDPC codes have the girth 6 by proving that there is no cycle of length 4 in the parity-check matrices. Let $\delta_{ijk}^{(z)}$ denote the difference $s_{ij} - s_{ik} \mod z$ of shift values $s_{ij}$ and $s_{ik}$ in $H_i$.

\vspace{2mm}
\begin{theorem}
Consider a $(d_v(d_v-1)L+1,d_v,1)$ PDF $B_i = \{ b_{i1}, b_{i2},\ldots, b_{id_v} \}$, $i=1,2,\ldots,L$. The proposed QC LDPC codes constructed from this PDF do not have any cycle of length 4 for every $z \geq d_v(d_v-1)L+1$.
\label{thm:main1}
\end{theorem}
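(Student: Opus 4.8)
The plan is to reduce the absence of length-4 cycles to a purely arithmetic statement about the shift-value differences, and then to exploit the defining property of a PDF, namely that the integer differences of the base blocks are confined to a small symmetric interval. First I would recall the standard characterization of 4-cycles for a single row of circulants. A length-4 cycle is a closed walk through two distinct variable nodes $(i,a)$ and $(i',a')$ (block index and in-block offset) and two distinct check nodes $r_1 \neq r_2$, where column $(i,a)$ of $H_i$ carries a $1$ in row $a + s_{ij} \bmod z$. Writing the four incidence relations and subtracting the two that share a variable node yields the compatibility condition $\delta_{ijk}^{(z)} = \delta_{i'lm}^{(z)}$ for some $j \neq k$ and $l \neq m$. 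Conversely, any such coincidence completes to a genuine 4-cycle by solving for the offsets: distinct shift values force distinct rows (immediate since $0 \leq s_{ij} \leq z-1$ are distinct), and in the same-block case $i=i'$ a coincidence with $(j,k)\neq(l,m)$ forces $j \neq l$, hence distinct columns. The upshot is that the code has no 4-cycle if and only if the $d_v(d_v-1)L$ differences $\delta_{ijk}^{(z)}$, taken over all $i$ and all ordered pairs $j \neq k$, are pairwise distinct modulo $z$.

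Next I would translate the PDF hypothesis into a statement about the integer differences $b_{ij}-b_{ik}$. Set $v = d_v(d_v-1)L+1$ and recall $s_{ij}=b_{ij}$. By the definition of a $(v,d_v,1)$ PDF, its $(v-1)/2$ backward differences $b_{in}-b_{im}$ (with $m<n$) cover $\{1,2,\dots,(v-1)/2\}$; since there are exactly $(v-1)/2$ of them, they form a bijection onto this set. Consequently the full collection of ordered integer differences $\{\, b_{ij}-b_{ik} : j \neq k \,\}$ equals $\{\pm 1, \pm 2, \dots, \pm (v-1)/2\}$, each value attained exactly once.

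The conceptual heart is the third step: reducing these integers modulo an arbitrary $z \geq v$ and checking that distinctness survives. A positive difference $d$ satisfies $0 < d \leq (v-1)/2 < v \leq z$, so $d \bmod z = d$ lands in $\{1,\dots,(v-1)/2\}$; a negative difference $-d$ reduces to $z-d \in \{z-(v-1)/2,\dots,z-1\}$. Each image is order-preserving, hence internally injective, so the only possible collision is between a positive and a negative difference. Such a collision would require $(v-1)/2 \geq z-(v-1)/2$, i.e.\ $z \leq v-1$; since $z \geq v$ this is impossible. Therefore all $d_v(d_v-1)L$ differences $\delta_{ijk}^{(z)}$ are distinct modulo $z$ for every $z \geq v$, and by the characterization of the first step the proposed code is 4-cycle-free.

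I expect the main obstacle to be conceptual rather than computational: it is precisely the \emph{perfectness} of the difference family that pins the integer differences inside the symmetric window $\pm\{1,\dots,(v-1)/2\}$, and this localization is what makes reduction modulo every $z \geq v$ harmless. A generic $(v,d_v,1)$ CDF only guarantees distinctness of the differences modulo $v$ itself; without the perfect constraint, two differences that are distinct in $\mathbb{Z}_v$ could still coincide modulo a larger $z$. Thus the strength of the statement, namely that it holds \emph{for every} $z \geq d_v(d_v-1)L+1$ rather than only for $z = d_v(d_v-1)L+1$, genuinely relies on the PDF property and not merely on the CDF property.
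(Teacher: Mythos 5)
Your proposal is correct and follows essentially the same route as the paper's proof: reduce 4-cycle-freeness to pairwise distinctness of the differences $\delta_{ijk}^{(z)}$, use the PDF property to pin the backward differences onto $\{1,\ldots,(v-1)/2\}$ and hence the forward differences onto $\{z-(v-1)/2,\ldots,z-1\}$, and observe that $z \geq v$ keeps these two ranges disjoint. The only difference is that you spell out the cycle characterization and the collision arithmetic in more detail than the paper, which simply asserts them.
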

\begin{proof}
The necessary and sufficient condition for avoiding cycles of length 4 is that all $\delta_{ijk}^{(z)}$'s, $i=1,\ldots,L$ and $1 \leq j \neq k \leq d_v$, are distinct. The backward differences over $\mathbb{Z}_z$ of the shift values cover $1$ to $d_v (d_v-1)L/2$ due to the property of the PDF, and thus the forward differences over $\mathbb{Z}_z$ of the shift values cover $z-d_v (d_v-1)L/2$ to $z-1$. Since $z \geq d_v(d_v-1)L+1$, all $\delta_{ijk}^{(z)}$'s are distinct.
\end{proof}

\vspace{2mm}
\begin{theorem}
For $L \equiv 2~ \mathrm{or} ~3 \mod 4$, consider a $(6L+1,3,1)$ CDF $B_i = \{ b_{i1}, b_{i2}, b_{i3} \}$, $i=1,2,\ldots,L$, constructed from a hooked Skolem sequence of order $L$. The proposed QC LDPC codes constructed from this CDF do not have any cycle of length 4 for every $z \geq 6L+1$ and $z \neq 6L+2$.
\label{thm:main2}
\end{theorem}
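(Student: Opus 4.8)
The plan is to reuse the reduction already established in the proof of Theorem~\ref{thm:main1}: the proposed code contains no cycle of length $4$ if and only if the $6L$ differences $\delta_{ijk}^{(z)}$, taken over all blocks $i=1,\dots,L$ and all ordered pairs $j\neq k$ with $j,k\in\{1,2,3\}$, are pairwise distinct. So I would first record the shift values supplied by the construction in Subsection~\ref{subsec:Skolem}. With $t=L$, each block is $B_i=\{0,\,i,\,v_i+L\}$, where $(u_i,v_i)$ is the hooked Skolem pair satisfying $v_i-u_i=i$; since $1\le i\le L$ and $v_i\le 2L+1$, one checks $0<i<v_i+L\le 3L+1$, so $s_{i1}=0$, $s_{i2}=i$, $s_{i3}=v_i+L$. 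Because every shift value lies in $[0,3L+1]$ and $z\ge 6L+1$, each positive (forward) difference $s_{ij}-s_{ik}$ is already reduced modulo $z$; hence the forward differences over $\mathbb{Z}_z$ are exactly the integer forward differences, independently of $z$.

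Next I would pin down this forward-difference set explicitly, which is the step that really uses the hooked Skolem structure. The three positive differences in block $i$ are $i$, $v_i+L$, and $(v_i+L)-i=u_i+L$ (using $u_i=v_i-i$). Letting $i$ range over $1,\dots,L$ and invoking the defining property $\bigcup_{i=1}^{L}\{u_i,v_i\}=\{1,\dots,2L-1,2L+1\}$ of a hooked Skolem sequence, the forward differences are precisely the $3L$ distinct integers
\[
D=\{1,\dots,L\}\cup\big(\{1,\dots,2L-1,2L+1\}+L\big)=\{1,2,\dots,3L-1\}\cup\{3L+1\},
\]
a set that omits $3L$ and contains $3L+1$. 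The backward differences over $\mathbb{Z}_z$ are then $\{z-d:d\in D\}$, so the full collection of $\delta_{ijk}^{(z)}$ values is $D\cup(z-D)$, and the theorem reduces to showing these $6L$ numbers are distinct.

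The elements inside $D$ are distinct and so are those inside $z-D$, so the only way a collision can arise is $d=z-d'$ for some $d,d'\in D$, that is $d+d'=z$. I would then bound the range $2\le d+d'\le 2(3L+1)=6L+2$, which, combined with $z\ge 6L+1$, confines any possible collision to $z\in\{6L+1,6L+2\}$. For $z=6L+1$ a collision would force $d'=6L+1-d$, but this lands outside $D$ for every $d\in D$ (for $d\le 3L-1$ it gives $d'\ge 3L+2$, and for $d=3L+1$ it gives $d'=3L$, neither in $D$), so no collision occurs; for $z\ge 6L+3$ we have $d+d'\le 6L+2<z$, again none. The single exception is $z=6L+2$, where $d=d'=3L+1$ yields $d+d'=z$, a genuine collision coming from the hooked element $v_i=2L+1$; this is exactly the modulus excluded in the hypothesis $z\neq 6L+2$.

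I expect the main obstacle to be establishing the exact shape of $D$, specifically the ``missing $3L$, extra $3L+1$'' feature. This is where the CDF case departs from the cleaner PDF case of Theorem~\ref{thm:main1}, in which the backward differences cover a full initial segment; here the lone value $3L+1$ is what makes $6L+2$ the unique forbidden modulus, so the boundary analysis at $z=6L+1$ and $z=6L+2$ must be carried out explicitly rather than absorbed into a single coverage argument.
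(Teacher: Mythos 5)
Your proof is correct, and its core coincides with the paper's: the key quantitative fact in both is that every positive difference of shift values is at most $3L+1$ (the ``hook'' value $v_i=2L+1$ being responsible for the maximum), so positive differences and their modular complements can only collide when $z\le 6L+2$. The difference is in execution. The paper splits into two cases: it disposes of $z=6L+1$ implicitly (distinctness there is exactly the defining property of a $(6L+1,3,1)$ CDF, which was verified when the construction was given in Subsection \ref{subsec:Skolem}), and for $z\ge 6L+3$ it uses only the minimum and maximum of the positive differences --- the interval $[1,3L+1]$ containing them is disjoint from the interval $[z-3L-1,z-1]$ containing their complements once $z\ge 6L+3$. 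You instead compute the positive difference set exactly, $D=\{1,\dots,3L-1\}\cup\{3L+1\}$, from the hooked Skolem pairs, and run one unified collision analysis $d+d'=z$ over all admissible $z$. This buys three things the paper's proof does not make explicit: a self-contained verification of $z=6L+1$ (no separate appeal to the CDF definition), a proof rather than an assertion that the positive differences are distinct, and a pinpointing of $d=d'=3L+1$ as the unique collision at $z=6L+2$ --- a fact the paper relegates to the remark after the theorem and to Theorem \ref{thm:main3}. One caution: your terminology is inverted relative to the paper's convention, in which the positive differences $b_{im}-b_{in}$, $m>n$, are called \emph{backward} differences and the \emph{forward} differences are the ones that reduce to $z-d$; if your argument is to be read alongside the paper, the names should be swapped.
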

\begin{proof}
We only need to show that all $\delta_{ijk}^{(z)}$'s, $i=1,\ldots,L$ and $1 \leq j \neq k \leq d_v$, are distinct for $z \geq 6L+3$. We can see from Definition \ref{def:Skolem} that the maximum value of $v_i$ in a hooked Skolem sequence of order $L$ is $2L +1$. Since $B_i = \{ 0, i, v_i+L \}$, the minimum and the maximum of the backward differences over $\mathbb{Z}_z$ of the shift values are $1$ and $3L+1$, respectively. Since all backward differences over $\mathbb{Z}_z$ are distinct and take values $1$ through $3L+1$, every forward difference over $\mathbb{Z}_z$ has a value from $z-3L-1$ to $z-1$. Therefore, all $\delta_{ijk}^{(z)}$'s are distinct for $z \geq 6L+3$.
\end{proof}

\vspace{2mm}
Note that for $d_v=3$, $L \equiv 2~ \mathrm{or} ~3 \mod 4$, and $z = 6L+2$, a forward difference and a backward difference over $\mathbb{Z}_z$ of the shift values in the proposed QC LDPC codes have $3L+1$ in common. Thus it seems that another construction method is needed for $z = 6L+2$. However, in fact, there does not exist any construction method which avoids cycles of length 4 as shown in the following theorem.  

\vspace{2mm}
\begin{theorem}
For $d_v=3$, $L \equiv 2~ \mathrm{or} ~3 \mod 4$, and $z = 6L+2$, which is not the case covered in Theorem \ref{thm:main2}, QC LDPC codes whose parity-check matrices have the form in (\ref{eq:H}) cannot avoid cycles of length 4 for any shift value assignment.
\label{thm:main3}
\end{theorem}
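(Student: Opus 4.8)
The plan is to argue by contradiction using a pure counting argument on the differences $\delta_{ijk}^{(z)}$. Suppose some shift-value assignment for a parity-check matrix of the form (\ref{eq:H}) with $d_v=3$, $L\equiv 2$ or $3 \bmod 4$, and $z=6L+2$ manages to avoid all cycles of length 4. As recalled in the proof of Theorem \ref{thm:main1}, this is equivalent to requiring that the $6L$ ordered differences $\delta_{ijk}^{(z)}$, $i=1,\ldots,L$, $1\le j\neq k\le 3$, be pairwise distinct nonzero elements of $\mathbb{Z}_z$. The first step is to notice that, since $z=6L+2$ is even, the element $z/2=3L+1$ is its own negative. Hence no difference can equal $3L+1$: if $\delta_{ijk}^{(z)}=3L+1$ for some $i$ and some pair $(j,k)$, then $\delta_{ikj}^{(z)}=-(3L+1)\equiv 3L+1 \pmod z$, so two distinct ordered differences coincide, contradicting distinctness. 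Therefore all $6L$ differences must lie in $\mathbb{Z}_z\setminus\{0,3L+1\}$, which has exactly $6L$ elements, forcing the $6L$ differences to be precisely this set.

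The heart of the argument is then a parity count. Because $z$ is even, $\delta_{ijk}^{(z)}$ has the same parity as $s_{ij}-s_{ik}$, so its parity equals the exclusive-or of the parities of the two shift values. I would classify each circulant $H_i$ by the parities of its three shift values: either all three share a parity, in which case all three unordered pairwise differences are even (contributing $6$ even ordered differences), or exactly two share a parity, in which case one pair is even and two are odd (contributing $2$ even and $4$ odd ordered differences). Letting $a$ be the number of circulants of the first type, the total number of even ordered differences produced is $4a+2L$ and the number of odd ones is $4(L-a)$.

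Finally I would compare these producible counts with the counts forced by the first step. Counting parities in $\mathbb{Z}_z\setminus\{0,3L+1\}$ gives $3L$ even and $3L$ odd elements when $L\equiv 2 \bmod 4$ (so that $3L+1$ is odd), and $3L-1$ even and $3L+1$ odd elements when $L\equiv 3 \bmod 4$ (so that $3L+1$ is even). Matching these to $4a+2L$ and $4(L-a)$ forces $a=L/4$ in the first case and $a=(L-1)/4$ in the second. But $L\equiv 2\bmod 4$ makes $L/4$ non-integral, and $L\equiv 3\bmod 4$ makes $(L-1)/4$ non-integral, so in both cases no integer $a$ can satisfy the parity balance, giving the desired contradiction.

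The step I expect to be the main obstacle is identifying the correct invariant. Since $6L$ differences must fit into exactly $6L$ available slots, a crude pigeonhole count does not close the argument; what does the work is the finer modulo-$2$ structure of $\mathbb{Z}_{6L+2}$ combined with the hypothesis $L\equiv 2,3\bmod 4$, which is exactly what breaks the required integrality of $a$. Pinpointing that the obstruction is a parity imbalance, rather than a size mismatch, is the conceptual crux; once that is in place, the remaining steps are routine counting.
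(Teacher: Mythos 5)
Your proof is correct, and its second half takes a genuinely different route from the paper's. The first step is the same in both: since $z=6L+2$ is even, $3L+1=z/2$ is its own negative, so no difference may take this value, and the $6L$ ordered differences must be exactly $\mathbb{Z}_z \setminus \{0,3L+1\}$. From there the paper works with the \emph{integer sum} $\Delta_B$ of the $3L$ backward differences: on one hand each circulant contributes $(s_{i2}-s_{i1})+(s_{i3}-s_{i1})+(s_{i3}-s_{i2})=2(s_{i3}-s_{i1})$, so $\Delta_B$ is even; on the other hand, reducing modulo $2$ and replacing each backward difference exceeding $3L+1$ by its complement $6L+2-\delta$ shows $\Delta_B \equiv 1+2+\cdots+3L \equiv 1 \pmod 2$ for $L \equiv 2,3 \pmod 4$, a contradiction. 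You instead \emph{count even-valued differences}: classifying each circulant by the parity pattern of its three shift values (all alike, or a $2$--$1$ split) gives $6a+2(L-a)=4a+2L$ even ordered differences, which you match against the parity census of $\mathbb{Z}_z\setminus\{0,3L+1\}$, namely $3L$ evens when $L\equiv 2 \pmod 4$ and $3L-1$ evens when $L\equiv 3 \pmod 4$; this forces $4a=L$ or $4a=L-1$, neither solvable in integers for the stated residues. Both arguments are elementary and hinge on the same hypothesis, but they use different invariants: the paper's sum-parity computation is more compact and dispatches both residues of $L$ in one stroke, while yours isolates the obstruction as an integrality (mod-$4$) failure in a census count, makes the role of $L \bmod 4$ explicit through the two cases, and shows as a byproduct that the obstruction vanishes exactly when $L \equiv 0,1 \pmod 4$ --- consistent with Theorem \ref{thm:main1}, which shows $z=6L+2$ is then achievable via PDFs.
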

\begin{proof}
Assume that there exists a shift value assignment such that the parity-check matrix avoids cycles of length 4. If $\delta_{ijk}^{(z)} = 3L+1$ for some $i$, $j$, $k$, the difference $\delta_{ikj}^{(z)}$ is also $3L+1$. Therefore, all $6L$ differences $\delta_{ijk}^{(z)}$, $i=1,2,\ldots,L$ and $1 \leq j \neq k \leq 3$, have to cover $1$ to $6L+1$ except $3L+1$. 

Let $\Delta_{B}$ denote the sum of backward differences over $\mathbb{Z}_z$ of the shift values. Note that the addition is calculated over the integer. Then, $\Delta_{B}$ is odd because
\begin{align}
\Delta_{B} &= \sum_{i,j,k:~j > k,\atop \delta_{ijk}^{(z)}<3L+1} {\delta_{ijk}^{(z)}} + \sum_{i,j,k:~j > k,\atop \delta_{ijk}^{(z)}>3L+1} {\delta_{ijk}^{(z)}} \nonumber \\
&\equiv \left\{ \sum_{i,j,k:~j > k,\atop \delta_{ijk}^{(z)}<3L+1} {\delta_{ijk}^{(z)}} - \sum_{i,j,k:~j > k,\atop \delta_{ijk}^{(z)}>3L+1} {\delta_{ijk}^{(z)}} \right\}~~\mathrm{mod}~2 \nonumber \\
&\equiv \left\{ \sum_{i,j,k:~j > k,\atop \delta_{ijk}^{(z)}<3L+1} {\delta_{ijk}^{(z)}} + \sum_{i,j,k:~j > k,\atop \delta_{ijk}^{(z)}>3L+1} {(6L+2 - \delta_{ijk}^{(z)})} \right\}~~\mathrm{mod}~2 \nonumber \\
&\equiv \left\{ \sum_{i,j,k:~j > k,\atop \delta_{ijk}^{(z)}<3L+1} {\delta_{ijk}^{(z)}} + \sum_{i,j,k:~j < k,\atop \delta_{ijk}^{(z)}<3L+1} {\delta_{ijk}^{(z)}} \right\}~~\mathrm{mod}~2 \nonumber \\
&\equiv \sum_{i=1}^{3L} {i}~~\mathrm{mod}~2 \nonumber \\
&\equiv 1~~\mathrm{mod}~2.
\label{eq:diff_sum}
\end{align}
On the other hand, since $\Delta_{B}$ can be expressed as
\begin{align*}
\Delta_{B} &= \sum_{i} \sum_{j,k:~j>k} \delta_{ijk}^{(z)} \\
&= \sum_{i} \sum_{j,k:~j>k} {(s_{ij}-s_{ik})} \\
&= \sum_{i} 2(s_{i3} - s_{i1}),
\end{align*}
this contradicts (\ref{eq:diff_sum}) in that the parities of $\Delta_{B}$ are different. Therefore, there is no shift value assignment such that the parity-check matrix in (\ref{eq:H}) avoids cycles of length 4 for $d_v=3$, $L \equiv 2~ \mathrm{or} ~3 \mod 4$, and $z = 6L+2$.
\end{proof}

\subsection{Advantages of the Proposed QC LDPC Codes}

The proposed QC LDPC codes have advantages mainly on being able to have various code parameters while guaranteeing the girth 6. For $d_v=3$, $L$ only has to be larger than or equal to 2 and for $d_v=4$, $L$ can be any integer from 4 to 1000, which are the same as the conventional QC LDPC codes constructed from CDFs \cite{Vasic}-\cite{Johnson2} except for $d_v=4$ and $L=2,3$. Moreover, for a fixed $L$, $z$ can have any value as long as $z \geq d_v(d_v-1)L+1$ except for the case of $d_v=3$, $L \equiv 2~ \mathrm{or} ~3 \mod 4$, and $z = 6L+2$.
On the other hand, the conventional QC LDPC codes constructed from CDFs generally do not guarantee the girth 6 for $z > d_v(d_v-1)L+1$ for a given $L$. In fact, it can be easily seen that $z$ can have the arbitrary value from $z \geq d_v (d_v-1)L +1$ and $z \equiv 1 \mod d_v (d_v -1)$ for the conventional QC LDPC codes to have the girth 6 because for the given CDF $B_i$, $i=1,\ldots,L$, the conventional QC LDPC codes can be constructed by only using the sets $B_{i_1},B_{i_2},\ldots,B_{i_{L'}}$ for $L' < L$. However, the conventional QC LDPC codes cannot still have sufficiently various code lengths.


\begin{figure*}[tb]
	\centering
	\includegraphics[scale=0.6]{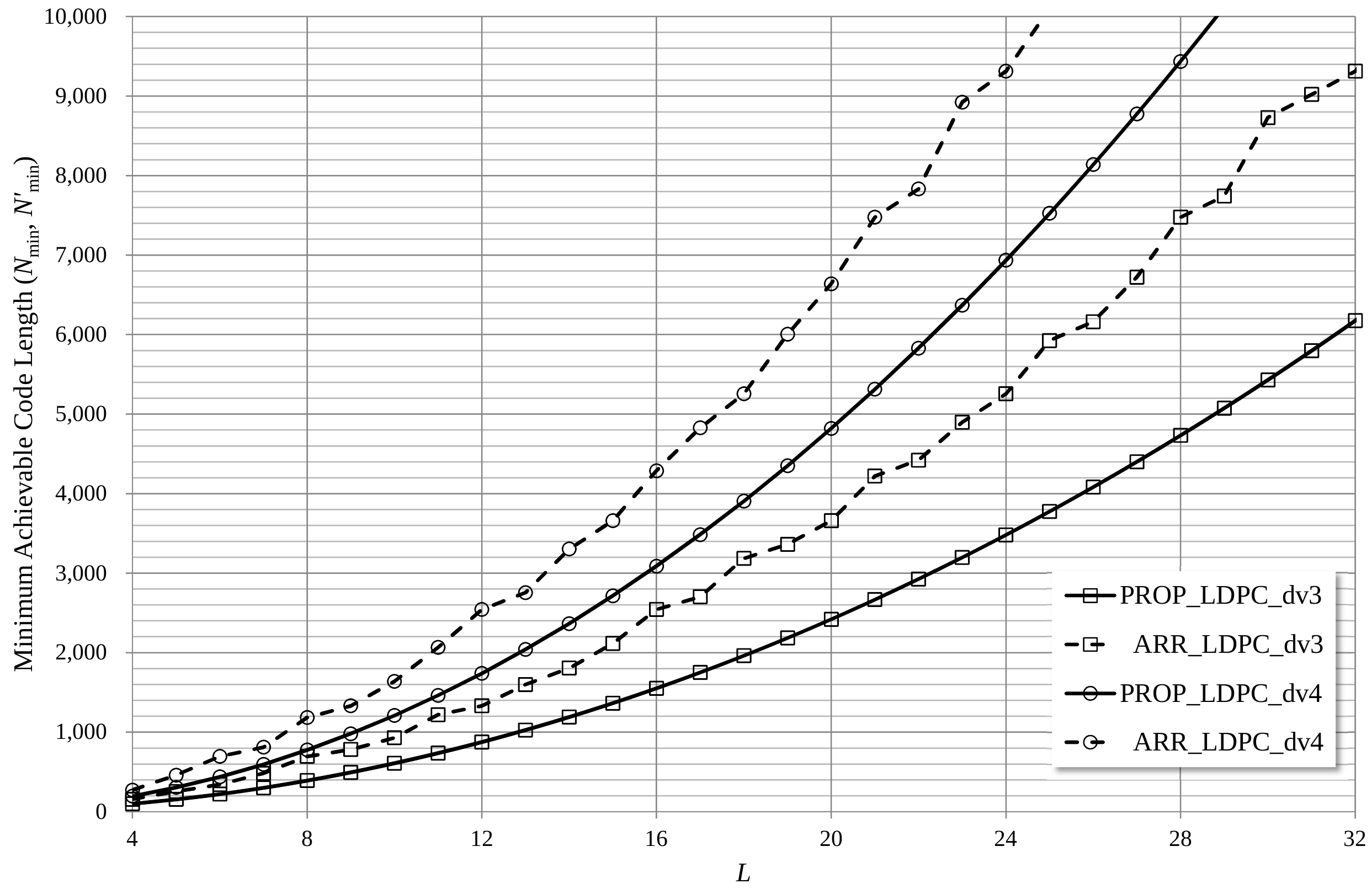}
	\caption{The minimum achievable code length of the proposed QC LDPC codes (denoted by PROP\_LDPC) and the array LDPC codes (denoted by ARR\_LDPC) for the girth 6.}
	\label{fig:min_length}
\end{figure*}

It is obvious that the proposed QC LDPC codes can achieve the minimum code length, which corresponds to the theoretical lower bound, among all possible regular LDPC codes for given $d_v$ and design rate under the girth 6 is guaranteed because the parity-check matrix of the proposed QC LDPC code with $z=d_v (d_v-1)L+1$ is actually an incidence matrix of a Steiner system \cite{Vasic}, \cite{Johnson3}. That is, the minimum achievable code length of the proposed QC LDPC codes is $N_{\min}=d_v (d_v-1)L^2+L$.
For comparison, consider QC LDPC codes whose parity-check matrices have the form of a $d_v \times d_v L$ array of $z' \times z'$ circulant permutation matrices, which correspond to the most common form of the existing regular QC LDPC codes. Obviously, these QC LDPC codes have the same design rate, row- and column-weights of the parity-check matrices with the proposed QC LDPC codes. The necessary condition on $z'$ for these QC LDPC codes to have the girth larger than or equal to 6 is $z' \geq d_v L$ \cite{Fossorier} and the array LDPC codes \cite{Fan} are known to achieve the equality \cite{Karimi}. Thus, the minimum achievable code length of these QC LDPC codes for guaranteeing the girth 6 is $N_{\min}'=(d_v)^2 L^2$. Fig. \ref{fig:min_length} illustrates such minimum achievable code lengths $N_{\min}$ and $N_{\min}'$ by varying $L$ for $d_v=3$ and $4$ and we can see that the proposed QC LDPC codes can flexibly have a very short length up to the theoretical lower bound unlike the array LDPC codes.
Note that QC LDPC codes in \cite{Xia} and \cite{Baldi1} cannot achieve the minimum code length.

The error correcting performance of the proposed QC LDPC codes may not be good for $z$ much larger than $d_v(d_v-1)L+1$ because, regardless of the code length, the girth of the proposed QC LDPC codes is fixed to 6 and the minimum distance of these QC LDPC codes has a value between $d_v+1$ and $2d_v$ \cite{Vasic}, \cite{Xia}. However, these restrictions on the girth and the minimum distance are not problematic for the proposed high-rate short QC LDPC codes and for large $L$ and small $z$, the error correcting performance of the proposed QC LDPC codes will be compared with that of other QC LDPC codes in Section \ref{sec:simulation}. Therefore, the proposed construction is adequate for high-rate QC LDPC codes of short and moderate lengths.

It is difficult to analyze the rank of the parity-check matrices of the proposed QC LDPC codes because they do not have an algebraic structure like the codes in \cite{Kou}-\cite{Kamiya}. Instead, the rank of the proposed parity-check matrices for various parameters can be numerically computed. 
It is observed that the parity-check matrices of the proposed QC LDPC codes have full rank for the parameters $d_v=3$, $4 \leq L \leq 20$, and $z$ such that the code length is less than or equal to 3,000, and have almost full rank, i.e., just one redundant row, for the parameters $d_v=4$, $4 \leq L \leq 15$, and $z$ such that the code length is less than or equal to 3,000. Moreover, every parity-check matrix has at least one full-rank circulant for $d_v=3$, which enables a simple encoding of the proposed QC LDPC codes, and has at least one almost full-rank circulant for $d_v=4$.
Assume that $H_L$ in (\ref{eq:H}) is invertible. Then, a generator matrix $G$ of systematic form is simply obtained \cite{Johnson2} as
\begin{equation*}
G = \begin{bmatrix}
~~ & & ~ & (H_L^{-1} H_1)^T \\
& I_{z(L-1)} & & (H_L^{-1} H_2)^T \\
& & & \vdots \\
& & & (H_L^{-1} H_{L-1})^T 
\end{bmatrix}
\end{equation*}
where $I_{z(L-1)}$ represents the $z(L-1) \times z(L-1)$ identity matrix.
This full-rank property of the parity-check matrices differentiates the proposed QC LDPC codes from the QC LDPC codes in \cite{Tang} and \cite{Kamiya}, whose parity-check matrices consist of a single row of circulants and have many redundant rows.

\begin{figure*}[tb]
	\centering
	\includegraphics[scale=0.6]{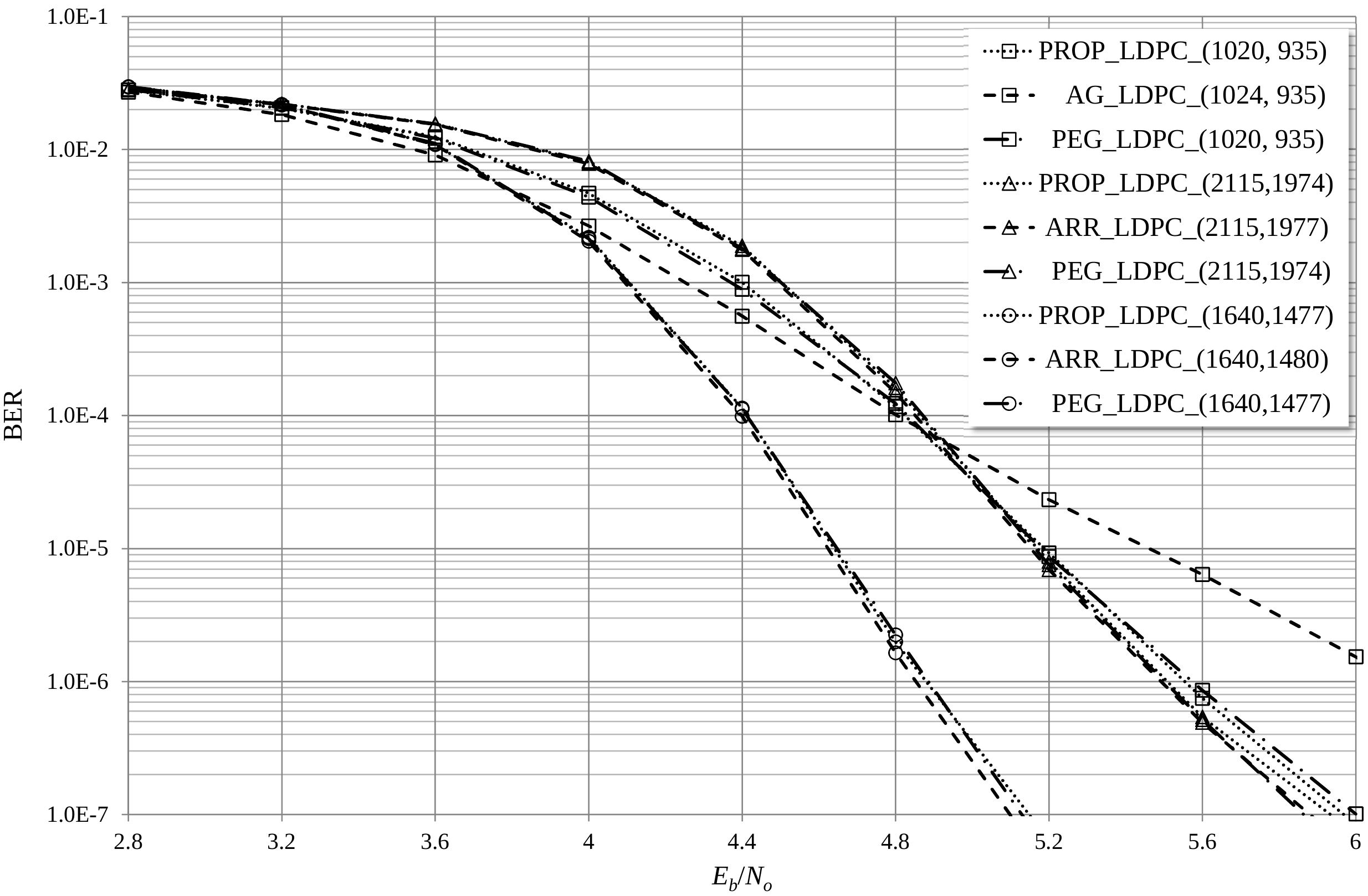}
	\caption{BER performance comparison of the proposed QC LDPC codes (denoted by PROP\_LDPC), the affine geometry QC LDPC codes (denoted by AG\_LDPC), the array LDPC codes (denoted by ARR\_LDPC), and the PEG LDPC codes (denoted by PEG\_LDPC).}
	\label{fig:performance}
\end{figure*}

\vspace{2mm}
\section{Simulation Results} \label{sec:simulation}

In this section, the error correcting performance of the proposed QC LDPC codes is verified via numerical analysis and compared with that of some algebraic QC LDPC codes and progressive edge-growth (PEG) LDPC codes \cite{Hu} with girth 6. As algebraic QC LDPC codes, affine geometry QC LDPC codes \cite{Kamiya} and array LDPC codes \cite{Fan} are used. Note that the PEG LDPC codes are not quasi-cyclic but random-like, and they are known to have the error correcting performance as good as random LDPC codes.
The parameters of the algebraic QC LDPC codes are set to have as equal values with those of the proposed QC LDPC codes as possible and the parameters of the PEG LDPC codes are exactly the same as those of the proposed QC LDPC codes. All results are obtained based on PC simulation using the sum-product decoding under the additive white Gaussian noise (AWGN) channel. The maximum number of iterations is set to 100.

First, the rate-0.9167 (1020, 935) proposed QC LDPC code with $d_v=3$, $L=12$, and $z=85$ is compared with the rate-0.9131 (1024, 935) affine geometry QC LDPC code and the rate-0.9167 (1020, 935) PEG LDPC code. The bit error rate (BER) performance of these LDPC codes is shown in Fig. \ref{fig:performance} and we can see that the proposed QC LDPC code and the PEG LDPC code show a better BER performance than the affine geometry QC LDPC code in the high signal-to-noise ratio (SNR) region. Second, the rate-0.9333 (2115, 1974) proposed QC LDPC code with $d_v=3$, $L=15$, and $z=141$ is compared with the rate-0.9348 (2115, 1977) array LDPC code and the rate-0.9333 (2115, 1974) PEG LDPC code. It is shown in Fig. \ref{fig:performance} that these LDPC codes have almost the same BER performance. Finally, the rate-0.9006 (1640, 1477) proposed QC LDPC code with $d_v=4$, $L=10$, and $z=164$ is compared with the rate-0.9024 (1640, 1480) array LDPC code and the rate-0.9006 (1640, 1477) PEG LDPC code. It is shown in Fig. \ref{fig:performance} that these LDPC codes also have almost the same BER performance.

%




%

%

\vspace{2mm}
\section{Conclusions} \label{sec:conclusions}

In this paper, a new class of high-rate QC LDPC codes with $d_v=3$ or $4$ is proposed, which have parity-check matrices consisting of a single row of circulants and having the girth 6. The construction of these QC LDPC codes exploits the CDFs constructed from hooked Skolem sequences in the case of $d_v=3$ and $L \equiv 2~\mathrm{or}~3 \mod 4$, and the PDFs in other cases. In designing the proposed QC LDPC codes, we can flexibly choose the values of $L$ and $z$ including the minimum achievable code length for a given design rate. The parity-check matrices of the proposed QC LDPC codes have full rank when $d_v=3$ and have almost full rank, i.e., just one redundant row, when $d_v=4$. Via numerical analysis, it is verified that the error correcting performance of the proposed QC LDPC codes is better than or almost equal to that of the affine geometry QC LDPC codes, the array LDPC codes, and the PEG LDPC codes.

\end{document}